\title{Capacity Analysis and Sum Rate Maximization for the SCMA Cellular Network Coexisting with D2D Communications}
\author{Yukai Liu, Wen Chen\corinfo{wenchen@sjtu.edu.cn}}
\address[]{Shanghai Institute of Advanced Communications and Data Sciences, Department of Electronic Engineering, Shanghai Jiao Tong University, Shanghai 200240, China}
\begin{document}

\maketitle

\begin{abstract}
Sparse code multiple access (SCMA) is the most concerning scheme among non-orthogonal multiple access (NOMA) technologies for 5G wireless communication new interface. Another efficient technique in 5G aimed to improve spectral efficiency for local communications is device-to-device (D2D) communications. Therefore, we utilize the SCMA cellular network coexisting with D2D communications for the connection demand of the Internet of things (IOT), and improve the system sum rate performance of the hybrid network. We first derive the information-theoretic expression of the capacity for all users and find the capacity bound of cellular users based on the mutual interference between cellular users and D2D users. Then we consider the power optimization problem for the cellular users and D2D users jointly to maximize the system sum rate. To tackle the non-convex optimization problem, we propose a geometric programming (GP) based iterative power allocation algorithm. Simulation results demonstrate that the proposed algorithm converges fast and well improves the sum rate performance.
\keywords{SCMA; D2D; cellular capacity; geometric programming}
\end{abstract}
\section{Introduction}
\label{Introduction}
The large connection and high data rate constitute the core goals for the 5G wireless networks. Some key technologies of 5G systems include device-to-device (D2D) communications, non-orthogonal multiple access techniques (NOMA) along with massive multiple-input multiple-output (MIMO), ultra-dense radio networking, all-spectrum access, and so on \cite{1}. The conventional orthogonal multiple access (OMA) schemes allocate orthogonal resource blocks (RBs) either in time, frequency, or code domains to different users \cite{2,3}, which is underloaded as such schemes occupy more RBs than the users. However, NOMA can support massive user access via non-orthogonal RB allocation \cite{4}, which is an overloaded system. Two main approaches of the power-domain NOMA and the code-domain NOMA are widely investigated. When multiple users transmit at the same RB by power-domain NOMA, users are allocated with different power levels, while the code-domain NOMA transmits at the same RB in distinguishing spreading codes \cite{5,34}.

Among the currently proposed code domain NOMA, sparse code multiple access (SCMA) has been commonly considered as a promising method. SCMA is proposed by Nikopour and Baligh \cite{6}, in which, coded bits are directly mapped into the multi-dimensional complex lattice point (called codeword) and the codewords are designed to be sparse. The sparsity of the SCMA codewords enables massive connectivity and the use of suboptimal message passing algorithm (MPA) to detect multiple users. For general SCMA system, the users' SCMA codeword in each dimension will be modulated to an Orthogonal Frequency Division Multiple (OFDM) subcarrier before the air interface. The number of users is larger than the number of subcarriers, which lead to the non-orthogonal feature. On the other side, D2D communication is a promising technology for local communications, which allows nearby devices to communicate without base station (BS) or with limited BS involvement, and improves the link reliability, spectral efficiency and system capacity \cite{7}. It is expected that the social network structure can be further expanded through D2D communications, which supports the development of the Internet of things (IOT). Many recent works have investigated D2D communications and SCMA cellular communications hybrid network \cite{8,9,10,11,12,13,14,15,16,17,18}, and often focus on optimization problem \cite{8,10,11,12,13,14,15,16,17,18,19}.

When D2D communications are considered in uplink cellular network, two main modes have been utilized. In simple mode, a proportion of available resources are allocated to D2D users while both cellular users and D2D users utilize SCMA. In another mode, SCMA is only employed by cellular users and some subcarriers can be reused by D2D users' transmission. Liu $et$ $al.$ \cite{8} have considered the analytical model for the SCMA enhanced cellular network coexisting with D2D. Based on the approximate system area spectral efficiency (ASE), the performance between two different modes has been compared and the system design guidance along with SCMA codebook allocation scheme have been proposed. In \cite{9}, the distance between D2D transmitter and the base station has been restricted by specific methods, which improves the block error rate (BER) performance as well as the convergence behavior.

Both simple mode and complex mode have been considered into the research of optimization problem \cite{10,11,14,15}. A graph-based joint mode selection and resource allocation algorithm with predefined threshold has been proposed to maximize the system sum rate in \cite{10}, and \cite{11} has further considered more issues like partner assignment and admission control. Kim $et$ $al.$ \cite{12} have considered joint power and resource allocation by using heuristic and inner approximation method, and they have proposed a two-phase algorithm with low complexity. The Opportunistic SCMA codebook allocation (OSCA) \cite{13} and the interference-aware hyper-graph based codebook allocation (IAHCA) \cite{14,15} have been proposed to obtain the suboptimal solution and reduce the cross-tier interference between cellular users and D2D users. In OSCA, codebooks are opportunistically assigned based on the channel conditions of the cellular links and D2D links. While in IAHCA, hypergraph model is used to characterize the interference, and each RB is allowed to be shared by one cellular user and more than one D2D users. Moreover, a channel gain based mode selection criterion as well as a greedy style partner assignment scheme have been proposed in \cite{15}, while the total transmit power minimization has been considered in \cite{16}. Besides, the game based resource allocation scheme has been proposed in \cite{17}, and the block-chain-based transaction flow has been analyzed for resource allocation in \cite{18}. Authors in \cite{19} have utilized Lagrangian relaxation (LR) method and the Dinkelbach method successively for energy-efficient maximization problem.

Recent works for SCMA and NOMA capacity analysis are summarized as in \cite{5,20,21,22,31,32,33}. Le $et$ $al.$ \cite{20} have considered the low-density spreading (LDS) channels with fading and analyzed the spectral efficiency. In \cite{21}, the authors have analyzed and compared several NOMA schemes in code domain and proposed the theoretical derivations. Besides, Shental $et$ $al.$ \cite{5} have analyzed the closed form expression of low density code domain NOMA.

Moreover, Han $et$ $al.$ \cite{22} have considered the downlink SCMA systems with user grouping character. The power allocation scheme with maximum capacity has been proposed for both intragroup and intergroup case. Then, the Lagrange dual decomposition based algorithm has been proposed to find the near-optimal solution. The downlink SCMA model is also utilized in \cite{31}, where the authors have briefly analyzed the system capacity and separated the codebook assignment and power allocation. Chen $et$ $al.$ \cite{32} have derived the capacity region of uplink SCMA system based on the theory of multiple access channels and analyzed the influence between data rate, power allocation function, and resource assignment matrix. Further, the common and individual outage probability regions have been proposed and optimized with Lagrangian dual method and adaptive algorithm. In \cite{33}, the authors have compared and analyzed the channel capacities of SCMA and low-density signature multiple access (LDSMA) schemes.

In recent related work, when sum rate maximization problems are considered, concrete analysis for the expression of cellular users' capacity is lacked \cite{11,19,22}. Besides, the complex mode which brings more mutual interference between cellular users and D2D users is worth optimization for system sum rate. This paper considers cellular uplink transmission simultaneously with D2D communications, where SCMA is employed only for cellular users and D2D users work in complex mode.

In this paper, our main contribution contains the derivation of SCMA capacity bound for general SCMA system and the sum rate maximization problem. We first try to derive the information-theoretic expression of the capacity for all users in the hybrid network model. Since it is hard to derive a close form expression of cellular capacity, we find the capacity bound for cellular users based on general SCMA codeword structure. After the theoretic derivation and analysis, we focus on the sum rate maximization problem by power optimization allocation for different users. Since this is a non-convex optimization problem, we propose an algorithm based on geometric programming (GP), which transforms parts of the objective function and rewrite the problem in standard GP form to get the optimal solution with convex optimization theory. By iteration, the solution of GP problem will converge to the optimal solution of the original problem. The convergence and gain of the proposed algorithm are conducted by simulation part. The performance shows that this algorithm is appropriate and significant.

The rest of the paper is organized as follows. The network model and SCMA are described in Section~\ref{System}. The theoretic analysis of system capacity for cellular users and D2D users are presented in Section~\ref{Capacities}. In Section~\ref{Sumrate}, the optimization problem with power allocation is formulated and the GP based iterative algorithm is proposed and introduced in detail. Section~\ref{Simulation} devotes to the numerical results, and the paper finally concludes in Section~\ref{Conclusion}. 

Throughout this paper, the following notations will be used \cite{23}. The bold lowercase letter $\bm{x}$ denotes a column vector, and a matrix is represented by a bold uppercase letter $\bm{X}$. $\bm{I}$ denotes the identity matrix. The superscript $(\cdot)^{T}$ denotes matrix transpose and $(\cdot)^{*}$ denotes conjugate transpose matrix. $diag(\bm{x})$ denotes a diagonal matrix with the diagonal entries being vector $\bm{x}$. $E[x]$ denotes the expectation of a random variable $x$. For a set $A$, $|A|$ denotes the number of elements, and $A\setminus n$ denotes the set $A$ in which the element $n$ is excluded.

\section{System Model}
\label{System}
In this section, we introduce the hybrid network with cellular users and D2D users. 
The factor graph and mapping matrix are also introduced for SCMA, which is essential for capacity analysis. Then, the expression of SCMA received signal is proposed.   

\subsection{Hybrid Network}
\label{hybrid}
Figure \ref{fig1} shows a hybrid network example of a cellular coexisting with two D2D pairs \cite{23}. In our proposed network, we consider $J$ cellular users and $J_{D}$ D2D pairs. The $i$th cellular user is denoted as CU$_{i}$, and the D2D transmitter as well as the receiver of the $i$th D2D pair are denoted as DT$_{i}$ and DR$_{i}$, respectively. Cellular users are allocated with SCMA codewords spread over OFDM tones and transmitted to the air interface. While for each D2D transmitter, one of the same OFDM subcarriers is occupied.
\begin{figure}
	\centering  
	\includegraphics[width=0.9\linewidth]{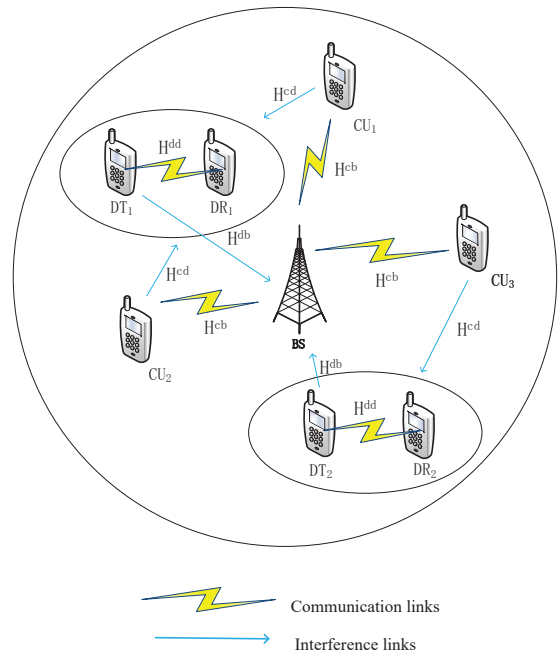}  
	\caption{The cellular and D2D hybrid network.}  
	\label{fig1}
\end{figure}

\subsection{SCMA Structure}
\label{structure}
The SCMA encoder can be defined as a mapping from $\log_{2}(M)$ coded bits to a $K$-dimensional complex codebook of size $M$. We select $N$ dimensions from $K$ for the sparse codebook design, in other words, each $K$-dimensional complex codeword has only $N$-dimensional non-zero elements. In SCMA, $J$ cellular users are allocated with $K$ OFDM subcarriers. In this model, we consider that each D2D pair will employ one of the $K$ resource blocks (OFDM subcarriers) so that $J_{D} \leq K$. For each CU$_{j}$, we introduce an indicator vector $\bm{f}_{j}=(f_{1j},f_{2j},\dots,f_{Kj})^T$ of size $K$, which $k$th elements ${f}_{kj}$ is defined as
\begin{equation}
f_{kj}=\left\{
\begin{aligned}
0,&\quad x_{jk}=0,\\
1,&\quad x_{jk}\neq 0,
\end{aligned}
\right.
\label{eq1}
\end{equation}
for $j=1,2,...,J$, where ${x}_{jk}$ is the data of user $j$ transmitting on the subcarrier $k$. $\bm{F}=(\bm{f}_{1},\bm{f}_{2},...,\bm{f}_{J})$ denotes the corresponding indicator matrix, where the set of non-zero entries in a column corresponds to the subcarriers that a cellular user will transmit over, and the set of nonzero entries in a row denotes the users who collide on the same subcarrier, 
%
The Forney factor graph can depict the structure of a indicator matrix \cite{24}. Figure \ref{fig2} shows a factor graph with $J=6$, $K=4$, and $N=2$, and Eq.~\eqref{eq2} is the corresponding indicator matrix. The size of the matrix determines the number of layer nodes and resource nodes, and when ${f}_{kj}=1$, the corresponding nodes are connected in the factor graph. Define two sets $\xi_{k}=\{j|{f}_{kj}\neq 0\}$ for $k=1,...,K$ and $\zeta_{j}=\{k|{f}_{kj}\neq 0\}$ for $j=1,...,J$. Besides, $|\xi_{k}|\triangleq d_{c}$ and $|\zeta_{j}|\triangleq d_{f}$ denote that each subcarrier supports $d_c$ users and each user occupies $d_{f}$ subcarriers, respectively. In the proposed network model, $J_{D}$ of $K$ resource nodes are also used by D2D communications. 
\begin{figure}
	\centering  
	\includegraphics[width=0.8\linewidth]{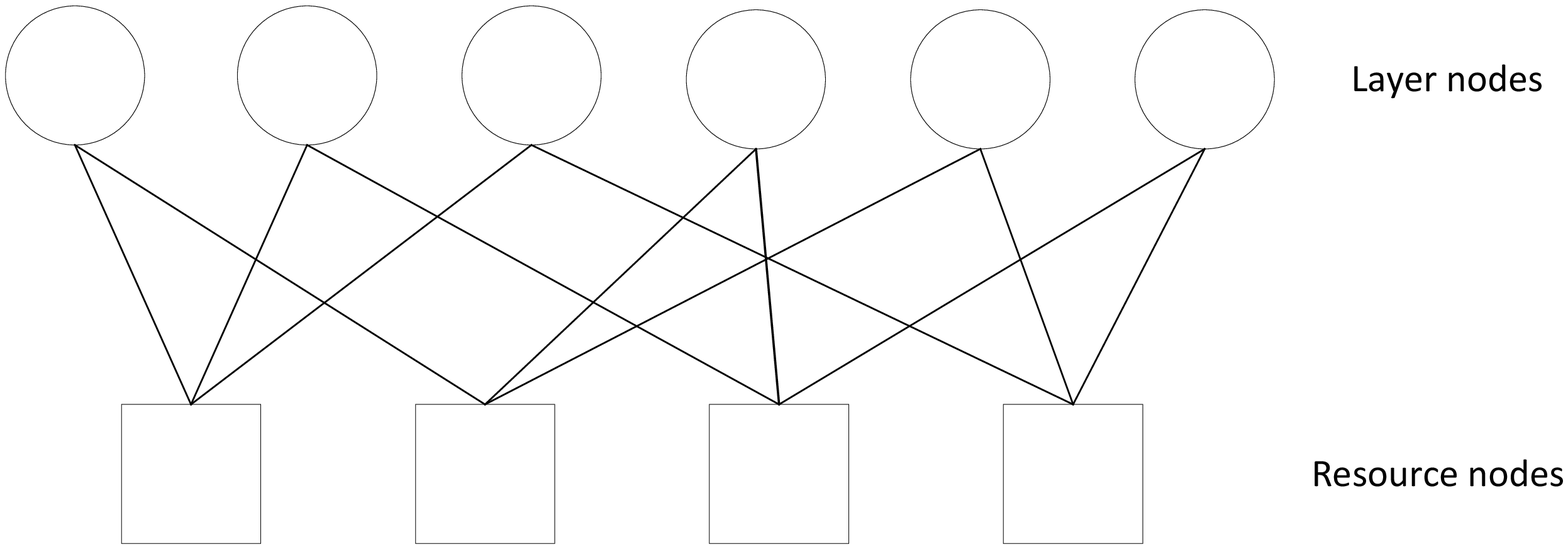}  
	\caption{Factor graph with $J=6$, $K=4$, and $N=2$.}  
	\label{fig2}
\end{figure}
\begin{equation}
\bm{F}=\left[ \begin{array}{cccccc}
1 & 1 & 1 & 0 & 0 & 0\\
1 & 0 & 0 & 1 & 1 & 0\\
0 & 1 & 0 & 1 & 0 & 1\\
0 & 0 & 1 & 0 & 1 & 1
\end{array}\right].
\label{eq2}
\end{equation}

The design of SCMA codewords has been regarded as the optimization problem for the mapping matrix and the SCMA encoder \cite{6,25}.
Let a non-zero SCMA codeword $\bm{x}_{N,j}$ be $N$-dimensional for CU$_{j}$, and $\bm{u}_{2N,j}$ be the equivalent $2N$-dimensional QAM constellation \cite{23}. The codeword can be written as
\begin{equation}
\bm{x}_{N,j}=\Delta_{j}\cdot(\bm{E}_{r}+i\cdot\bm{E}_{i})\cdot\bm{M}\cdot\bm{u}_{2N,j},
\label{eq3}
\end{equation}
where $\Delta_{j}$ denotes the constellation operators for CU$_{j}$, i.e. phase rotation, $\bm{M}$ denotes the $2N \times 2N$ unitary rotation matrix, $\bm{E}_{r}$ and $\bm{E}_{i}$ are the $N \times 2N$ matrices which can select components from vector $\bm{u}_{j}$ that corresponds to the real part and imaginary part of QAM symbols, respectively, and $i=\sqrt{-1}$ is the imaginary unit.

\subsection{SCMA Transmission}
\label{transmission}
In the SCMA enhanced cellular uplink transmission, the received signal $\bm{y}$ at BS contains cellular users' data as well as D2D users' data, expressed as
\begin{equation}
\bm{y}=\sum_{j=1}^{J}diag(\bm{h}_{j}^{cb})\bm{x}_{j}+\sum_{j=1}^{J_{D}}h^{db}_{j}\bm{x}^{'}_{j}+\bm{n},
\label{eq4}
\end{equation}
where $\bm{h}_{j}^{cb}=(h_{j1}^{cb},h_{j2}^{cb},...,h_{jK}^{cb})^{T}$ is the channel vector between CU$_{j}$ and BS, and $\bm{x}_{j}=(x_{j1},x_{j2},...,x_{jK})^{T}$ is the SCMA codeword vector of CU$_{j}$. Besides, $h^{db}_{j}$ is the channel state information (CSI) from DT$_{j}$ to BS, and $\bm{x}^{'}_{j}=(x_{j1}^{'},x_{j2}^{'},...,x_{jK}^{'})^{T}$ is the D2D codeword vector which is only non-zero in a certain row. $\bm{n}$ is the additive white Gaussian noise vector with mean $\bm{0}$ and covariance matrix $N_{0}\bm{I}$, denoted as $\bm{n}\sim\bm{\textit{CN}}(\bm{0},N_{0}\bm{I})$.
\section{Capacities of Cellular Users and D2D Users}
\label{Capacities}
In this section, we focus on the achievable rates for cellular users and D2D users and derive the information-theoretic expression, respectively. Besides, we provide specific analysis for the capacity bound of cellular users and for D2D users the closed form expression is given.

\subsection{Capacity of Cellular Users} 
\label{capacity c}
Since $\bm{n}$ is a complex Gaussian random vector, we can get the distribution for the noise in subcarrier $k$, that is, $n_{k}\sim\bm{\textit{CN}}(0,N_{0})$. Assume that subcarrier $k$ is occupied by DT$_{k}$, and define $E[x_{k}^{'}(x_{k}^{'})^{*}]\triangleq P_{k}^{'}$. Then according to Eq.~\eqref{eq4}, we have
\begin{equation}
x_{k}^{'}\sim\bm{\textit{CN}}(0,P_{k}^{'}).
\label{eq5}
\end{equation}
For cellular users' communications, the equivalent noise term includes D2D interference and noise. Thus, the equivalent noise $\tilde n_{k}\triangleq h_{k}^{db}x_{k}^{'}+n_{k}$ and
\begin{equation}
\tilde n_{k}\sim\bm{\textit{CN}}(0,N_{0}+|h_{k}^{db}|^{2}P_{k}^{'}).
\label{eq6}
\end{equation}

Let $\bm{K}_{\tilde{\bm{n}}}$ be the covariance matrix of $\tilde{\bm{n}}=(\tilde n_1,\dots,\tilde n_K)^T$, and define $\tilde N_{k}\triangleq N_{0}+|h_{k}^{db}|^{2}P_{k}^{'}$ for simplicity. Then we have $\bm{K}_{\tilde{\bm{n}}}=diag\{\tilde N_{1}, \tilde N_{2}, ..., \tilde N_{K}\}$. Rewrite Eq.~\eqref{eq4} as 
\begin{equation}
\bm{y}=\bm{H}\bm{x}+\tilde{\bm{n}},
\label{eq7}
\end{equation}
where $\bm{H}=(\bm{H}_{1},...,\bm{H}_{J})$, $\bm{H}_{j}=diag\{h_{j1}^{cb},...,h_{jK}^{cb}\}$, for $j=1,...,J$, and $\bm{x}=(\bm{x}_{1}^T,\bm{x}_{2}^T,...,\bm{x}_{J}^T)^{T}$.
%
Let $\bm{K}_{\bm{x}}$ be the covariance matrix of $\bm{x}$. Then the covariance $\bm{K}_{\bm{y}}$ of $\bm{y}$ is
\begin{equation}
\bm{K}_{\bm{y}}=\bm{K}_{\tilde{\bm{n}}}+\bm{H}\bm{K}_{\bm{x}}\bm{H}^{*}.
\label{eq8}
\end{equation}

By Shannon's theorem, the capacity is calculated based on the mutual information. The channel in Eq.~\eqref{eq7} is considered as random and time-invariant. By \cite{26}, for a specific realization, we have the following analysis process as shown in Eq.~\eqref{eq9},
\begin{equation}
\begin{aligned}
&I(\bm{x};\bm{y}|\bm{H}=\textup{H})\\
&=h(\bm{y}|\bm{H}=\textup{H})-h(\tilde{\bm{n}})\\
&\leq \log\det(\pi e\bm{K}_{\bm{y}})-\log\{(\pi e)^{K}\prod_{k=1}^{K}\tilde{N}_{k}\}\\
&=\log\det(\bm{K}_{\bm{y}})-\sum_{k=1}^{K}\log(N_{0}+|h_{k}^{db}|^{2}P_{k}^{'}),
\end{aligned}
\label{eq9}
\end{equation}
where the equality in Eq.~\eqref{eq9} holds when the input $\bm{x}$ follows Gaussian input. The maximum mutual information is not a closed form expression as the determinant is not explicit. Let $\bm{K}_{\bm{x}_{j}}=E[\bm{x}_{j}\bm{x}_{j}^{*}]$. Then it is easy to show that $\bm{K}_{\bm{x}_{j}}$ is a Hermitian matrix and
$\bm{K}_{\bm{x}}=diag\{\bm{K}_{\bm{x}_{1}},\cdots,\bm{K}_{\bm{x}_{J}}\}$
is a block diagonal matrix. 

Notice that $\bm{K}_{\bm{x}_{j}}$ is not a diagonal matrix in general, it is valuable to derive the capacity bounds with Eq.~\eqref{eq9}. We introduce two lemmas for further analysis.

\begin{lemma}
	\label{lemma1}
	Let the $N$-dimensional square matrices $\bm{Q_{1}}$, $\bm{Q_{2}}$ be Hermitian. The eigenvalues of $\bm{Q_{1}}$, $\bm{Q_{2}}$, and $\bm{Q_{1}}+\bm{Q_{2}}$ are defined as ${\lambda_{n}(\bm{Q_{1}})}_{n=1}^{N}$, ${\lambda_{n}(\bm{Q_{2}})}_{n=1}^{N}$, and ${\lambda_{n}(\bm{Q_{1}}+\bm{Q_{2}})}_{n=1}^{N}$, which are ordered as $\lambda_{1}\leq \lambda_{2}\leq...\leq\lambda_{N}$. Then for $n=1,...,N$
	\begin{equation}
	\begin{aligned}
	\lambda_{n}(\bm{Q_{1}})+\lambda_{1}(\bm{Q_{2}})&\leq\lambda_{n}(\bm{Q_{1}}+\bm{Q_{2}})\\
	&\leq\lambda_{n}(\bm{Q_{1}})+\lambda_{N}(\bm{Q_{2}}).
	\end{aligned}
	\label{eq10}
	\end{equation}
\end{lemma}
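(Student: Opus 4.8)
The statement is a form of \emph{Weyl's inequality} for the eigenvalues of a sum of Hermitian matrices, and the plan is to prove it through the Courant--Fischer variational characterization of eigenvalues. With the increasing ordering convention adopted here, the $n$th smallest eigenvalue of a Hermitian matrix $\bm{Q}$ admits the min--max representation
\[
\lambda_{n}(\bm{Q})=\min_{\dim V=n}\ \max_{\substack{\bm{v}\in V\\ \|\bm{v}\|=1}}\ \bm{v}^{*}\bm{Q}\bm{v},
\]
where the minimum runs over all $n$-dimensional subspaces $V$. I would first recall this identity, justifying it briefly via the spectral decomposition, together with the two-sided Rayleigh quotient bound $\lambda_{1}(\bm{Q})\leq\bm{v}^{*}\bm{Q}\bm{v}\leq\lambda_{N}(\bm{Q})$, valid for every unit vector $\bm{v}$; these are the only ingredients the argument needs.

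For the upper bound, I would fix any unit vector $\bm{v}$ and split the quadratic form as $\bm{v}^{*}(\bm{Q_{1}}+\bm{Q_{2}})\bm{v}=\bm{v}^{*}\bm{Q_{1}}\bm{v}+\bm{v}^{*}\bm{Q_{2}}\bm{v}\leq\bm{v}^{*}\bm{Q_{1}}\bm{v}+\lambda_{N}(\bm{Q_{2}})$, using the upper Rayleigh bound uniformly in $\bm{v}$. Substituting this into the min--max formula for $\lambda_{n}(\bm{Q_{1}}+\bm{Q_{2}})$ and pulling the constant $\lambda_{N}(\bm{Q_{2}})$ out of both the inner maximum and the outer minimum yields $\lambda_{n}(\bm{Q_{1}}+\bm{Q_{2}})\leq\lambda_{n}(\bm{Q_{1}})+\lambda_{N}(\bm{Q_{2}})$. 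The lower bound is entirely symmetric: replacing the upper Rayleigh bound with the lower one, $\bm{v}^{*}\bm{Q_{2}}\bm{v}\geq\lambda_{1}(\bm{Q_{2}})$, gives $\bm{v}^{*}(\bm{Q_{1}}+\bm{Q_{2}})\bm{v}\geq\bm{v}^{*}\bm{Q_{1}}\bm{v}+\lambda_{1}(\bm{Q_{2}})$, and the same min--max argument produces $\lambda_{n}(\bm{Q_{1}}+\bm{Q_{2}})\geq\lambda_{n}(\bm{Q_{1}})+\lambda_{1}(\bm{Q_{2}})$.

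The only delicate point I anticipate is bookkeeping with the ordering convention: one must ensure that the \emph{min--max} (rather than the max--min) form of Courant--Fischer corresponds to the $n$th \emph{smallest} eigenvalue under the convention $\lambda_{1}\leq\dots\leq\lambda_{N}$, since swapping the two would interchange the roles of $\lambda_{1}(\bm{Q_{2}})$ and $\lambda_{N}(\bm{Q_{2}})$ and flip the inequalities. Once the correct variational form is pinned down, the argument is a direct monotonicity step requiring no further computation; I would verify the convention on the extreme cases $n=1$ and $n=N$ before applying it.
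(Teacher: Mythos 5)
Your proof proposal is correct. Note, however, that the paper itself provides no proof of this lemma at all: immediately after stating Lemma~2 it remarks that Lemma~1 ``is the classic Weyl's Theorem'' and defers to an external reference, so there is no internal argument to compare against. Your Courant--Fischer derivation is the standard self-contained proof of exactly this special case of Weyl's inequalities, and the two ingredients you isolate are the right ones: the uniform Rayleigh bounds $\lambda_{1}(\bm{Q_{2}})\leq \bm{v}^{*}\bm{Q_{2}}\bm{v}\leq \lambda_{N}(\bm{Q_{2}})$ shift the quadratic form of $\bm{Q_{1}}$ by a constant, and the min--max over $n$-dimensional subspaces transports that constant shift to the $n$th eigenvalue. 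The comparison passes correctly through both the inner maximum and the outer minimum (if $f(V)\geq g(V)$ for every $n$-dimensional subspace $V$, then $\min_{V}f\geq\min_{V}g$, and symmetrically for the upper bound), and your concern about the ordering convention is resolved correctly: under $\lambda_{1}\leq\dots\leq\lambda_{N}$ the $n$th \emph{smallest} eigenvalue is indeed the min--max expression, as your $n=1$ and $n=N$ sanity checks confirm. Relative to the paper's bare citation, your argument buys self-containedness at the cost of a few lines; it establishes only the weak form of Weyl's inequalities (with indices $1$ and $N$ on $\bm{Q_{2}}$) rather than the general index-pair version, but that weak form is precisely what the paper uses in its subsequent upper- and lower-bound eigenvalue estimates, so nothing is lost.
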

\begin{lemma}
	\label{lemma2}
	Let the $N$-dimensional square matrices $\bm{Q}$, $\bm{S}$ be
	Hermitian and nonsingular, respectively. The eigenvalues of $\bm{Q}$ and $\bm{S}\bm{Q}\bm{S}^{*}$ are defined as ${\lambda_{n}(\bm{Q})}_{n=1}^{N}$ and ${\lambda_{n}(\bm{S}\bm{Q}\bm{S}^{*})}_{n=1}^{N}$. The singular values of $\bm{S}$ are defined as $0<\sigma_{1}\leq\sigma_{2}\leq...\leq\sigma_{N}$. For each $n=1,...,N$, there is a real number $\theta_{n}$ $\in$ $[\sigma_{1}^{2},\sigma_{N}^{2}]$ such that
	\begin{equation}
	\lambda_{n}(\bm{S}\bm{Q}\bm{S}^{*})=\theta_{n}\lambda_{n}(\bm{Q}).
	\label{eq11}
	\end{equation}
\end{lemma}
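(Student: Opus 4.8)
The final statement is Ostrowski's theorem on the effect of a congruence transformation on the eigenvalues of a Hermitian matrix, and the plan is to prove it through the Courant--Fischer variational characterisation of eigenvalues. Since $\bm{Q}$ is Hermitian and $\bm{S}$ is nonsingular, $\bm{S}\bm{Q}\bm{S}^{*}$ is again Hermitian, so its eigenvalues are real and can be ordered as in the hypothesis. First I would write, for each $n$,
\begin{equation}
\lambda_{n}(\bm{S}\bm{Q}\bm{S}^{*})=\min_{\dim V=n}\ \max_{\bm{0}\neq\bm{x}\in V}\frac{\bm{x}^{*}\bm{S}\bm{Q}\bm{S}^{*}\bm{x}}{\bm{x}^{*}\bm{x}},
\end{equation}
the minimum being over all $n$-dimensional subspaces $V$.

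Next I would introduce the change of variables $\bm{y}=\bm{S}^{*}\bm{x}$. Because $\bm{S}$ is nonsingular, $\bm{S}^{*}$ maps $n$-dimensional subspaces bijectively onto $n$-dimensional subspaces, so the family of trial subspaces is unchanged, while the Rayleigh quotient factorises as
\begin{equation}
\frac{\bm{x}^{*}\bm{S}\bm{Q}\bm{S}^{*}\bm{x}}{\bm{x}^{*}\bm{x}}=\frac{\bm{y}^{*}\bm{Q}\bm{y}}{\bm{y}^{*}\bm{y}}\cdot\frac{\bm{y}^{*}\bm{y}}{\bm{y}^{*}(\bm{S}^{*}\bm{S})^{-1}\bm{y}}.
\end{equation}
The first factor is the Rayleigh quotient of $\bm{Q}$, while the second factor, which I will call the weight $g(\bm{y})$, satisfies $g(\bm{y})\in[\sigma_{1}^{2},\sigma_{N}^{2}]$ for every nonzero $\bm{y}$, because the eigenvalues of $\bm{S}^{*}\bm{S}$ are precisely the squared singular values $\sigma_{1}^{2}\leq\cdots\leq\sigma_{N}^{2}$.

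The main obstacle is that $\bm{Q}$ is only Hermitian, not positive semidefinite, so its Rayleigh quotient may change sign and one cannot simply pull the weight $g(\bm{y})$ out of the extremal problem. I would resolve this by a sign split on $\lambda_{n}(\bm{Q})$, using the eigenspaces of $\bm{Q}$ as trial subspaces in both the min--max form and its dual max--min form: taking the span of the $n$ smallest (respectively the $N-n+1$ largest) eigenvectors of $\bm{Q}$ yields, when $\lambda_{n}(\bm{Q})\geq 0$, the two-sided bound $\sigma_{1}^{2}\lambda_{n}(\bm{Q})\leq\lambda_{n}(\bm{S}\bm{Q}\bm{S}^{*})\leq\sigma_{N}^{2}\lambda_{n}(\bm{Q})$, with the inequalities reversing symmetrically when $\lambda_{n}(\bm{Q})<0$. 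A conceptual shortcut for the same sign bookkeeping is Sylvester's law of inertia, which guarantees that the congruent matrices $\bm{S}\bm{Q}\bm{S}^{*}$ and $\bm{Q}$ have identical numbers of positive, negative, and zero eigenvalues, so $\lambda_{n}(\bm{S}\bm{Q}\bm{S}^{*})$ and $\lambda_{n}(\bm{Q})$ share a sign for each ordered index $n$. Finally, for $\lambda_{n}(\bm{Q})\neq 0$ I would set $\theta_{n}=\lambda_{n}(\bm{S}\bm{Q}\bm{S}^{*})/\lambda_{n}(\bm{Q})$, which the bound confines to $[\sigma_{1}^{2},\sigma_{N}^{2}]$, while if $\lambda_{n}(\bm{Q})=0$ the same bound forces $\lambda_{n}(\bm{S}\bm{Q}\bm{S}^{*})=0$ and any $\theta_{n}$ in the interval works; this proves Eq.~\eqref{eq11}.
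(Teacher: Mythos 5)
Your proof is correct, but there is nothing in the paper to compare it against line by line: the authors state Lemma~\ref{lemma2}, remark that it ``can be easily derived by Lemma~\ref{lemma1}'' (Weyl's theorem) as in \cite{27}, and omit the proof entirely. What you have written is the standard proof of Ostrowski's theorem, and its key tool is the Courant--Fischer variational characterization rather than Weyl's inequalities: the substitution $\bm{y}=\bm{S}^{*}\bm{x}$ is legitimate because $\bm{S}^{*}$ carries $n$-dimensional subspaces bijectively onto $n$-dimensional subspaces, the weight $\bm{y}^{*}\bm{y}/\bm{y}^{*}(\bm{S}^{*}\bm{S})^{-1}\bm{y}$ indeed lies in $[\sigma_{1}^{2},\sigma_{N}^{2}]$ since the eigenvalues of $(\bm{S}^{*}\bm{S})^{-1}$ are the reciprocals $\sigma_{n}^{-2}$, and your sign handling (two-sided bounds from eigenspace trial subspaces, with Sylvester's inertia as a shortcut) is sound. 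The comparison is instructive: Lemma~\ref{lemma1} controls eigenvalues of a \emph{sum} $\bm{Q_{1}}+\bm{Q_{2}}$, and there is no routine way to rewrite the congruence $\bm{S}\bm{Q}\bm{S}^{*}$ as such an additive perturbation, so the paper's claim that the lemma follows ``easily'' from Lemma~\ref{lemma1} is loose at best; your route through the variational characterization is the one that actually closes the argument, at the cost of invoking a tool the paper never states. One detail deserves an explicit line in a final write-up: in the min--max upper bound over the span of the $n$ smallest eigenvectors of $\bm{Q}$, the Rayleigh quotient of $\bm{Q}$ can still be negative even when $\lambda_{n}(\bm{Q})\geq 0$, so the weight cannot simply be pulled out of the maximum; the bound survives because a negative quotient times the positive weight is automatically below $\sigma_{N}^{2}\lambda_{n}(\bm{Q})$, and the analogous one-line case check is needed in the $\lambda_{n}(\bm{Q})<0$ branch.
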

Notice that Lemma~\ref{lemma1} is the classic Weyl's Theorem and Lemma~\ref{lemma2} can be easily derived by Lemma~\ref{lemma1} in \cite{27}, the proof is omitted here.

In order to find the upper bound and the lower bound of $\log\det(\bm{K}_{\tilde{\bm{n}}}+\bm{H}\bm{K}_{\bm{x}}\bm{H}^{*})$ by Eq.~\eqref{eq8}, \eqref{eq9}, we need to analyze the eigenvalues. 
Define ${\lambda_{k}(\bm{Q})}_{k=1}^{K}$ as the eigenvalues of an $K$-dimensional square matrix $\bm{Q}$ in nondecreasing order, then we produce the analysis in three aspects.
\subsubsection{Upper Bound}
\label{upper}
By using Lemma~\ref{lemma1}, we have
\begin{equation}
\begin{aligned}
\lambda_{k}(\bm{K}_{\tilde{\bm{n}}}&+\bm{H}\bm{K}_{\bm{x}}\bm{H}^{*})\\
&\leq\lambda_{k}(\bm{K}_{\tilde{\bm{n}}})+\lambda_{K}(\bm{H}\bm{K}_{\bm{x}}\bm{H}^{*})\\
&=\tilde{N}_{k}+\lambda_{K}\left(\sum_{j=1}^{J}\bm{H}_{j}\bm{K}_{\bm{x}_{j}}\bm{H}_{j}^{*}\right)\\
&\leq \tilde{N}_{k}+\sum_{j=1}^{J}\lambda_{K}(\bm{H}_{j}\bm{K}_{\bm{x}_{j}}\bm{H}_{j}^{*}).\\
\end{aligned}
\label{eq12}
\end{equation}
Notice that the first equality in Eq.~\eqref{eq12} occurs if and only if there is nonzero vector $\bm{\alpha}$ such that $\bm{K}_{\tilde{\bm{n}}}\bm{\alpha}=\lambda_{k}(\bm{K}_{\tilde{\bm{n}}})\bm{\alpha}$, $\bm{H}\bm{K}_{\bm{x}}\bm{H}^{*}\bm{\alpha}=\lambda_{K}(\bm{H}\bm{K}_{\bm{x}}\bm{H}^{*})\bm{\alpha}$, and $(\bm{K}_{\tilde{\bm{n}}}+\bm{H}\bm{K}_{\bm{x}}\bm{H}^{*})\bm{\alpha}=\lambda_{k}(\bm{K}_{\tilde{\bm{n}}}+\bm{H}\bm{K}_{\bm{x}}\bm{H}^{*})\bm{\alpha}$. In general condition, $\bm{K}_{\tilde{\bm{n}}}$ and $\bm{H}\bm{K}_{\bm{x}}\bm{H}^{*}$ should have no common eigenvector, then the inequalities in Eq.~\eqref{eq12} are strict inequalities.

Since $\bm{K}_{\bm{x}_{j}}$ is Hermitian, $\bm{H}_{j}$ is nonsingular, and $\bm{H}_{j}\bm{H}_{j}^{*}=diag\{|h_{j1}^{cb}|^{2},|h_{j2}^{cb}|^{2}...,|h_{jK}^{cb}|^{2}\}$, by using Lemma~\ref{lemma2}, for CU$_{j}$ we have  
\begin{equation}
\lambda_{K}(\bm{H}_{j}\bm{K}_{\bm{x}_{j}}\bm{H}_{j}^{*})\leq\lambda_{K}(\bm{K}_{\bm{x}_{j}})\cdot\max\limits_{k}(|h_{jk}^{cb}|^{2}).
\label{eq13}
\end{equation}
According to the SCMA codebook design in Eq.~\eqref{eq3}, we introduce a $K \times N$ matrix $\bm{V}_{j}$ to transform non-zero $N$-dimensional codeword into $K$-dimensional codeword with $K-N$ zero elements. Then we have
\begin{equation}
\begin{aligned}
\bm{x}_{j}&=\bm{V}_{j}\cdot e^{i\theta_{j}}\cdot(\bm{E}_{r}+i\cdot\bm{E}_{i})\cdot\bm{M}\cdot\bm{u}_{j}\\
&=e^{i\theta_{j}}\cdot\bm{V}_{j}\cdot\bm{M^{'}}\cdot\bm{u}_{j},
\end{aligned}
\label{eq14}
\end{equation}
where $e^{i\theta_{j}}$ denotes the phase rotation here and $\bm{M}^{'}\triangleq(\bm{E}_{r}+i\cdot\bm{E}_{i})\cdot\bm{M}$ denotes a $N \times 2N$ complex matrix. Then
\begin{equation}
\begin{aligned}
\bm{K}_{\bm{x}_{j}}&=E[e^{i\theta_{j}}(\bm{V}_{j}\bm{M^{'}}\bm{u}_{j})(\bm{V}_{j}\bm{M^{'}}\bm{u}_{j})^{*}e^{-i\theta_{j}}]\\
&=(\bm{V}_{j}\bm{M^{'}})\cdot E[\bm{u}_{j}(\bm{u}_{j})^{T}]\cdot(\bm{V}_{j}\bm{M^{'}})^{*}.
\end{aligned}
\label{eq15}
\end{equation}
Let $\bm{V}_{j}\bm{M^{'}}\triangleq(\bm{M}_{j1},\bm{M}_{j2})$ where $\bm{M}_{j1}$ and $\bm{M}_{j2}$ are $K\times N$ matrices. Besides, let $E[\bm{u}_{j}(\bm{u}_{j})^{T}]\triangleq diag[\bm{A}_{j1},\bm{A}_{j2}]$ where $\bm{A}_{j1}$ and $\bm{A}_{j2}$ are $N\times N$ diagonal matrices in Gaussian input. Then we have
\begin{equation}
\bm{K}_{\bm{x}_{j}}=\bm{M}_{j1}\bm{A}_{j1}\bm{M}_{j1}^{*}+\bm{M}_{j2}\bm{A}_{j2}\bm{M}_{j2}^{*}.
\label{eq16}
\end{equation}
By using Lemma~\ref{lemma1} repeatedly, we have
\begin{equation}
\begin{aligned}
&\lambda_{K}(\bm{K}_{\bm{x}_{j}})\\
&=\lambda_{K}(\bm{M}_{j1}\bm{A}_{j1}\bm{M}_{j1}^{*}+\bm{M}_{j2}\bm{A}_{j2}\bm{M}_{j2}^{*})\\
&\leq\lambda_{K}(\bm{M}_{j1}\bm{A}_{j1}\bm{M}_{j1}^{*})+\lambda_{K}(\bm{M}_{j2}\bm{A}_{j2}\bm{M}_{j2}^{*})\\
&\triangleq\tau_{K,j}.
\end{aligned}
\label{eq17}
\end{equation}
Combing Eq.~\eqref{eq12}, \eqref{eq13}, and \eqref{eq17}, we conclude that
\begin{equation}
\begin{aligned}
\lambda_{k}(\bm{K}_{\tilde{\bm{n}}}&+\bm{H}\bm{K}_{\bm{x}}\bm{H}^{*})\\
&<\tilde{N}_{k}+\sum_{j=1}^{J}\tau_{K,j}\cdot\max\limits_{k}(|h_{jk}^{cb}|^{2}).
\end{aligned}
\label{eq18} 
\end{equation}
This formula reveals the theoretic upper bound of each eigenvalue.

\subsubsection{Lower Bound}
\label{lower}
Similarly, we can analyze the lower bound for $\lambda_{k}(\bm{K}_{\tilde{\bm{n}}}+\bm{H}\bm{K}_{\bm{x}}\bm{H}^{*})$. By using Lemma~\ref{lemma1} and Lemma~\ref{lemma2}, we have
\begin{equation}
\begin{aligned}
\lambda_{k}(\bm{K}_{\tilde{\bm{n}}}&+\bm{H}\bm{K}_{\bm{x}}\bm{H}^{*})\\
&\geq \lambda_{k}(\bm{K}_{\tilde{\bm{n}}})+\lambda_{1}(\bm{H}\bm{K}_{\bm{x}}\bm{H}^{*})\\
&\geq \tilde{N}_{k}+\sum_{j=1}^{J}\lambda_{1}(\bm{H}_{j}\bm{K}_{\bm{x}_{j}}\bm{H}_{j}^{*})\\
&\geq \tilde{N}_{k}+\sum_{j=1}^{J}[\lambda_{1}(\bm{K}_{\bm{x}_{j}})\cdot\min\limits_{k}(|h_{jk}^{cb}|^{2})].\\
\end{aligned}
\label{eq19}
\end{equation}
By Lemma~\ref{lemma1} and Eq.~\eqref{eq16}, we have
\begin{equation}
\begin{aligned}
&\lambda_{1}(\bm{K}_{\bm{x}_{j}})\\
&=\lambda_{1}(\bm{M}_{j1}\bm{A}_{j1}\bm{M}_{j1}^{*}+\bm{M}_{j2}\bm{A}_{j2}\bm{M}_{j2}^{*})\\
&\geq\lambda_{1}(\bm{M}_{j1}\bm{A}_{j1}\bm{M}_{j1}^{*})+\lambda_{1}(\bm{M}_{j2}\bm{A}_{j2}\bm{M}_{j2}^{*})\\
&\triangleq\tau_{1,j}.
\end{aligned}
\label{eq20}
\end{equation}
Thus, we conclude in general condition that
\begin{equation}
\begin{aligned}
\lambda_{k}(\bm{K}_{\tilde{\bm{n}}}&+\bm{H}\bm{K}_{\bm{x}}\bm{H}^{*})\\
&>\tilde{N}_{k}+\sum_{j=1}^{J}\tau_{1,j}\cdot\min\limits_{k}(|h_{jk}^{cb}|^{2}).
\end{aligned} 
\label{eq21}
\end{equation}

\subsubsection{Capacity Bound}
\label{bound}
We have
\begin{equation}
\begin{aligned}
C_{c}&=\max\limits_{\bm{K}_{\bm{x}}} I(\bm{x};\bm{y}|\bm{H}=\textup{H})\\
&=\log\det(\bm{K}_{\bm{y}})-\sum_{k=1}^{K}\log(\tilde{N}_{k})\\
&=\sum_{k=1}^{K}\log\lambda_{k}(\bm{K}_{\bm{y}})-\sum_{k=1}^{K}\log(\tilde{N}_{k}),
\end{aligned}
\label{eq22}
\end{equation}
where the derivation is based on $\bm{x}\sim\textit{CN}(0,\bm{K}_{\bm{x}})$.

According to Eq.~\eqref{eq14}, the arbitrariness of $\bm{V}_{j}$ and $\bm{M}^{'}$ makes the $\bm{K}_{\bm{x}_{j}}$ non-diagonal, however, with Gaussian input we can define $\bm{u}_{j}\sim\textit{CN}(0,\bm{K}_{\bm{u}_{j}})$ where $\bm{K}_{\bm{u}_{j}}=diag\{P_{j,1}^{u},P_{j,2}^{u},...,P_{j,2N}^{u}\}$. 
Then according to Eq.~\eqref{eq17} we further analyze that
\begin{equation}
\begin{aligned}
&\tau_{K,j}\\
&=\lambda_{K}(\bm{M}_{j1}\bm{A}_{j1}\bm{M}_{j1}^{*})+\lambda_{K}(\bm{M}_{j2}\bm{A}_{j2}\bm{M}_{j2}^{*})\\
&=\lambda_{K}(\bm{M}_{j1}diag\{P_{j,1}^{u},...,P_{j,N}^{u}\}\bm{M}_{j1}^{*})\\
&+\lambda_{K}(\bm{M}_{j2}diag\{P_{j,N+1}^{u},...,P_{j,2N}^{u}\}\bm{M}_{j2}^{*}).
\end{aligned}
\label{eq23}
\end{equation}

To maximize the rate of cellular network, the upper bound of SCMA capacity should be figured out based on Eq.~\eqref{eq18}, \eqref{eq22}, and \eqref{eq23}, which is summarized in follows
\begin{equation}
\begin{aligned}
C_{c}<\sum_{k=1}^{K}\log \left\{1+\frac{\sum_{j=1}^{J}\tau_{K,j}\cdot\max\limits_{k}(|h_{jk}^{cb}|^{2})}{N_{0}+|h_{k}^{db}|^{2}P_{k}^{'}}\right\},
\label{eq24}
\end{aligned}
\end{equation}
where we should notice that, the bound for proposed network with general SCMA system is also determined by SCMA codebook design and factor graph structure. For more practical situation, SCMA codebook is designed based on optimal rotation matrix \cite{28}. Thus, the analysis of optimization problem formulation in section IV needs some simplification. 

\subsection{Capacity of D2D users}
\label{capacity d}
For D2D users, each D2D pair is allocated with one of the $K$ OFDM tones. For example, we assume that the $\ell$th D2D pair selects the $\ell$th OFDM tone for $\ell=1,2,...,J_{D}$. The received signal of DR$_\ell$ is disturbed by some cellular users' data, which is shown as
\begin{equation}
y_{\ell}=h_{\ell}^{dd} x_{\ell}^{'}+\sum_{j\in\xi_{\ell}}h_{j\ell}^{cd} x_{j\ell}+n_{\ell},
\label{eq25}
\end{equation}
where $h_{\ell}^{dd}$ is the CSI between $\ell$th D2D pair, $h_{j\ell}^{cd}$ is the CSI between CU$_{j}$ and DR$_{\ell}$, and the noise is still defined as $n_{\ell}\sim\bm{\textit{CN}}(0,N_{0})$. 
Assume that
\begin{equation}
x_{j\ell}\sim\bm{\textit{CN}}(0,P_{j\ell}),
\label{eq26}
\end{equation}
where $P_{j\ell}\triangleq E[x_{j\ell}x_{j\ell}^{*}]$. Let the equivalent noise term $n_{\ell}^{''}\triangleq\sum_{j\in\xi_{\ell}}h_{j\ell}^{cd} x_{j\ell}+n_{\ell}$ with the interference term. Then we have
\begin{equation}
n_{\ell}^{''}\sim\bm{\textit{CN}}(0,N_{0}+\sum_{j\in\xi_{\ell}}|h_{j\ell}^{cd}|^{2}P_{j\ell}).
\label{eq27}
\end{equation}
We can rewrite the received signal at DR$_{\ell}$ as
\begin{equation}
y_{\ell}=h_{\ell}^{dd}x_{\ell}^{'}+n_{\ell}^{''}.
\label{eq28}
\end{equation}
Similarly, we can calculate the mutual information for D2D pair $\ell$ as
\begin{equation}
\begin{aligned}
&I(x_{\ell}^{'};y_{\ell}|h_{\ell}^{dd})=h(y_{\ell}|h_{\ell}^{dd})-h(n_{\ell}^{''})\\
&\leq \log\left\{\pi e\left(|h_{\ell}^{dd}|^{2}P_{\ell}^{'}+N_{0}+\sum_{j\in\xi_{\ell}}|h_{j\ell}^{cd}|^{2}P_{j\ell}\right)\right\}\\
&-\log\left\{\pi e\left(N_{0}+\sum_{j\in\xi_{\ell}}|h_{j\ell}^{cd}|^{2}P_{j\ell}\right)\right\}\\
&= \log\left\{1+\frac{|h_{\ell}^{dd}|^{2}P_{\ell}^{'}}{N_{0}+\sum_{j\in\xi_{\ell}}|h_{j\ell}^{cd}|^{2}P_{j\ell}}\right\}.
\end{aligned}
\label{eq29}
\end{equation}
Thus, when the $\ell$th OFDM tone is chosen for the $\ell$th D2D pair, according to Eq.~\eqref{eq29}, we have
\begin{equation}
\begin{aligned}
C_{d}&=\sum_{\ell=1}^{J_{D}}\log\left\{1+\frac{|h_{\ell}^{dd}|^{2}P_{\ell}^{'}}{N_{0}+\sum_{j\in\xi_{\ell}}|h_{j\ell}^{cd}|^{2}P_{j\ell}}\right\}\\
&=\sum_{\ell=1}^{J_{D}}\log\left(1+\gamma_{\ell}^{d}\right),
\end{aligned}
\label{eq30}
\end{equation}
where $\gamma_{\ell}^{d}$ denotes the signal-to-interference-plus-noise ratio (SINR) for the $\ell$th D2D receiver.
\section{Sum Rate Maximization}
\label{Sumrate}

In this section, we focus on the sum rate maximization issue. We formulate the optimization problem as the power allocation of cellular users and D2D users. The GP based iterative algorithm is introduced to find the numerical solution.
\subsection{Problem Formulation}
\label{problem}
In section~\ref{Capacities}, the analysis is based on general SCMA codebook. However, according to the sparsity of SCMA codeword, most non-diagonal entries of $\bm{K}_{\bm{x}_{j}}$ are zero. Actually, most SCMA system models utilize optimal codebook design principles to obtain the diagonal covariance matrix \cite{24,25,36}. In this section, we consider the general situation as
\begin{equation}
\begin{aligned}
\bm{K}_{\bm{x}_{j}}&=E[\bm{x}_{j}\bm{x}_{j}^{*}]\\
&=diag\{E[x_{j1}x_{j1}^{*}], \dots, E[x_{jK}x_{jK}^{*}]\}\\
&\triangleq diag\{P_{j1}, P_{j2},...,P_{jK}\}.
\end{aligned}
\label{eq31}
\end{equation}
Notice that the maximum mutual information in Eq.~\eqref{eq9} can get closed form expression, we have the simplified derivation as
\begin{equation}
\lambda_{k}(\bm{K}_{y})=\tilde{N}_{k}+\sum_{j\in\xi_{k}}|h_{jk}^{cb}|^{2}P_{jk},k=1,2,...,K,
\label{eq32}
\end{equation}
where we should notice that $P_{jk}=0$ for $j\notin\xi_{k}$ with the corresponding factor graph. Then we have
\begin{equation}
\begin{aligned}
&\log\det(\bm{K}_{y})\\
&=\sum_{k=1}^{K}\log\left(N_{0}+|h_{k}^{db}|^{2}P_{k}^{'}+\sum_{j\in\xi_{k}}|h_{jk}^{cb}|^{2}P_{jk}\right).
\end{aligned}
\label{eq33}
\end{equation}
Finally, the capacity of cellular users is
\begin{equation}
C_{c}=\sum_{k=1}^{K}\log\left\{1+\frac{\sum_{j\in\xi_{k}}|h_{jk}^{cb}|^{2}P_{jk}}{N_{0}+|h_{k}^{db}|^{2}P_{k}^{'}}\right\}.
\label{eq34}
\end{equation}
Besides, for decoding the codewords of the CU$_{j}$ at the BS, we define the SINR in subcarrier $k\in\zeta_{j}$ as
\begin{equation}
\gamma_{jk}^{c}=\frac{|h_{jk}^{cb}|^{2}P_{jk}}{N_{0}+|h_{k}^{db}|^{2}P_{k}^{'}}.
\label{eq35}
\end{equation}
Accordingly, the sum rate maximization problem is formulated as
\begin{equation}
\begin{aligned}
\bm{P1:} \quad & \max_{\bm{P},\bm{P^{'}}} (C_{c}+C_{d})\\
s.t.\quad & C1:\gamma_{jk}^{c}\geq\gamma_{0}^{c},\quad j=1,\dots,J,\quad k\in\zeta_{j},\\
\quad& C2:\gamma_{\ell}^{d}\geq\gamma_{0}^{d},\quad \ell=1,\dots,J_{D},\\
\quad& C3:0\leq P_{j}=\sum_{k\in\zeta_{j}}P_{jk}\leq P_{0},\quad j=1,\dots,J,\\
\quad& C4:0\leq P_{\ell}^{'}\leq P_{0}^{'},\quad \ell=1,\dots,J_{D},
\end{aligned}
\label{eq36}
\end{equation}
where the constraints $C1$ and $C2$ denote the minimum QoS requirements of cellular users and D2D users, determined by the SINR threshold $\gamma_{0}^{c}$ and $\gamma_{0}^{d}$, respectively, $C3$ and $C4$ mean the transmitted power limitation. Since the objective function is nonlinear and nonconvex, P1 is a nonconvex optimization problem. We decide to rely on the tractable Geometric Programming problem to find the appropriate algorithm.

\subsection{Geometric Programming}
\label{GP}
Geometric Programing (GP) is a special optimization problem with useful properties \cite{29}. The standard form of GP is nonlinear and nonconvex, while the convex form of GP can make use of the standard convex optimization theory. Define a monomial $g$ as
\begin{equation}
g(\bm{x})=cx_{1}^{a_{1}}x_{2}^{a_{2}}...x_{n}^{a_{n}},
\label{eq37}
\end{equation}
where $c\geq0$ is the multiplicative constant, $a_{\ell}\in \bm{R}$ for $\ell=1,2,...,n$ is the exponents, and the argument $x_{\ell}>0$ for $\ell=1,2,...,n$. Then a posynomial is defined as a summation of monomials shown as
\begin{equation}
G(\bm{x})=\sum_{m=1}^{M}c_{m}x_{1}^{a_{1}^{m}}x_{2}^{a_{2}^{m}}...x_{n}^{a_{n}^{m}},
\label{eq38}
\end{equation}
where $a_{\ell}^{m}\in\bm{R}$ and $c_{m}\geq0$ for $\ell=1,2,\dots,n$ and $m=1,2,\dots,M$. The standard form for a GP is as follows.
\begin{equation}
\begin{aligned}
& \min_{\bm{x}} \quad G_{0}(\bm{x})\\
s.t.\quad & G_{s}(\bm{x})\leq 1, \quad s=1,\dots,S,\\
& h_{t}(\bm{x})=1, \quad t=1,\dots,T,
\end{aligned}
\label{eq39}
\end{equation}
where the posynomial $G_{s}(\cdot)$ and the monomial $h_{t}(\cdot)$ for $s=0,1,\dots, S$ and $t=1,\dots,T$ are defined as
\begin{equation}
G_{s}(\bm{x})=\sum_{m=1}^{M_{s}}c_{sm}x_{1}^{a_{1}^{sm}}x_{2}^{a_{2}^{sm}}...x_{n}^{a_{n}^{sm}},
\label{eq40}
\end{equation}
and
\begin{equation}
h_{t}(\bm{x})=c_{t}x_{1}^{a_{1}^{t}}x_{2}^{a_{2}^{t}}...x_{n}^{a_{n}^{t}}
\label{eq41}.
\end{equation}
In Eq.~\eqref{eq40} and Eq.~\eqref{eq41}, we have $c_{sm}>0$ and $c_{t}>0$ for $m=1,2,...,M_{s}$, $s=0,1,...,S$, and $t=1,2,...,T$.

For the standard form of GP, the posynomials are not convex functions. In order to simplify the problem, the logarithmic conversion as $y_{\ell}=\log x_{\ell}$, $b_{sm}=\log c_{sm}$, and $b_{t}=\log c_{t}$ are utilized. Then the equivalent problem in term of $\bm{y}$ is
\begin{equation}
\begin{aligned}
& \min_{\bm{y}} \quad G_{0}^{'}(\bm{y})=\log\sum_{m=1}^{M_{0}}\exp(\bm{a}_{0m}^{T}\bm{y}+b_{0m})\\
&s.t.\quad G_{i}^{'}(\bm{y})=\log\sum_{m=1}^{M_{s}}\exp(\bm{a}_{sm}^{T}\bm{y}+b_{sm})\leq 0,\\
&\quad\quad s=1,2,\dots,S,\\
&\quad h_{t}^{'}(\bm{y})=\bm{a}_{t}^{T}\bm{y}+b_{t}=0, \quad t=1,2,\dots,T,
\end{aligned}
\label{eq42}
\end{equation}
where $\bm{a}_{sm}=(a_{1}^{sm},\cdots,a_{n}^{sm})^{T}$ and $\bm{a}_{t}=(a_{1}^{t},\cdots,a_{n}^{t})^{T}$. The problem in \eqref{eq42} is the convex form of GP.

\subsection{The Proposed GP Based Iterative Algorithm}
The objective function in $\bm{P1}$ is obviously neither a posynomial nor a convex function. We consider the deformation of the objective function to construct the GP problem. For the transmitted power of cellular users, we further improve the constraint $C3$ as $P_{jk}\leq\frac{P_{0}}{d_f}, \forall k\in\zeta_{j}$, $\forall j$. According to Eq.~\eqref{eq30}, \eqref{eq34}, we have

\begin{equation}
\begin{aligned}
&C_{c}+C_{d}\\
&=\sum_{k=1}^{K}\log(1+\sum_{j\in\xi_{k}}{\gamma_{jk}^{c}})+\sum_{\ell=1}^{J_{D}}\log\left\{1+\gamma_{\ell}^{d}\right\}\\
&=\log\left\{\prod_{k=1}^{K}\frac{g_{k}(\bm{p},\bm{p^{'}})}{f_{k}(\bm{p^{'}})}\cdot\prod_{\ell=1}^{J_{D}}\frac{g_{\ell}^{'}(\bm{p},\bm{p^{'}})}{f_{\ell}^{'}(\bm{p})}\right\},\\
\end{aligned}
\label{eq43}
\end{equation}
where 
\begin{equation}
\begin{aligned}
&f_{k}(\bm{p^{'}})\triangleq N_{0}+|h_{k}^{db}|^{2}P_{k}^{'},\\
&f_{\ell}^{'}(\bm{p})\triangleq N_{0}+\sum_{j\in\xi_{\ell}}|h_{j\ell}^{cd}|^{2}P_{j\ell},\\
&g_{k}(\bm{p},\bm{p^{'}})\triangleq N_{0}+|h_{k}^{db}|^{2}P_{k}^{'}+\sum_{j\in\xi_{k}}|h_{jk}^{cb}|^{2}P_{jk},\\
&g_{\ell}^{'}(\bm{p},\bm{p^{'}})\triangleq N_{0}+\sum_{j\in\xi_{\ell}}|h_{j\ell}^{cd}|^{2}P_{j\ell}+|h_{\ell}^{dd}|^{2}P_{\ell}^{'}.\\
\end{aligned}
\label{eq44}
\end{equation}

Note that the logarithmic function is a monotonic function, we can change the maximization of the objective function to the minimization of its reciprocal. Then problem $\bm{P1}$ can be transformed into
\begin{equation}
\begin{aligned}
\bm{P2:} \quad& \min_{\bm{p},\bm{p^{'}}}\quad \prod_{k=1}^{K}\prod_{\ell=1}^{J_{D}}\frac{f_{k}(\bm{p^{'}})f_{\ell}^{'}(\bm{p})}{g_{k}(\bm{p},\bm{p^{'}})g_{\ell}^{'}(\bm{p},\bm{p^{'}})},\\
s.t.\quad & C1:(\gamma_{jk}^{c})^{-1}\gamma_{0}^{c}\leq 1, j=1,\dots,J, k\in\zeta_{j},\\
& C2:(\gamma_{\ell}^{d})^{-1}\gamma_{0}^{d}\leq 1, \ell=1,\dots,J_{D},\\
& C3:P_{jk}(\frac{P_{0}}{d_f})^{-1}\leq 1, j=1,\dots,J,k\in\zeta_{j},\\
& C4:P_{\ell}^{'}(P_{0}^{'})^{-1}\leq 1, \ell=1,\dots,J_{D},
\end{aligned}
\label{eq45}
\end{equation}
where all the constraints are rewritten as posynomials with upper bounds.

The transformed objective function is a ratio between two posynomials, which belongs to Complementary GP and is an intractable NP-hard problem \cite{29}. However, the denominator posynomial can be approximated by a monomial such that the ratio is a posynomial as shown in Lemma~\ref{lemma3}.
\begin{lemma}
	\label{lemma3}
	Let the $g(\bm{x})=\sum_{\ell}u_{\ell}(\bm{x})$ is a posynomial where $u_{\ell}(\bm{x})$ is a monomial. The argument $\bm{x}$ is with all positive elements. There must have a monomial $\tilde{g}(\bm{x})$ as
	\begin{equation}
	g(\bm{x})\geq \tilde{g}(\bm{x})\triangleq \prod_{\ell}\left(\frac{u_{\ell}(\bm{x})}{\beta_{\ell}}\right)^{\beta_{\ell}},
	\label{eq46}
	\end{equation}
	where $\beta_{\ell}>0$ and $\sum_{\ell}\beta_{\ell}=1$, and that becomes an equality when $\beta_{\ell}=\frac{u_{\ell}(\bm{x})}{g(\bm{x})}$ for $\forall \ell$.
\end{lemma}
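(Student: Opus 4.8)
The plan is to recognize the claimed inequality as the weighted arithmetic--geometric mean (AM--GM) inequality in disguise and to derive it from the concavity of $\log$. Because $\bm{x}$ has strictly positive entries and each $u_{\ell}$ is a monomial with nonnegative coefficient, it suffices to treat the generic case in which every $u_{\ell}(\bm{x})>0$; any term with a zero coefficient contributes nothing to $g$ and forces $\tilde{g}=0$, so the bound holds trivially and such terms may be discarded. The key move is the substitution $a_{\ell}=u_{\ell}(\bm{x})/\beta_{\ell}$, for which the prescribed weights satisfy $\beta_{\ell}>0$ and $\sum_{\ell}\beta_{\ell}=1$. With this substitution the weighted average becomes
\[
\sum_{\ell}\beta_{\ell}a_{\ell}=\sum_{\ell}u_{\ell}(\bm{x})=g(\bm{x}),
\]
while the weighted geometric mean is exactly $\prod_{\ell}a_{\ell}^{\beta_{\ell}}=\tilde{g}(\bm{x})$. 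Hence the lemma is equivalent to the standard weighted AM--GM statement $\sum_{\ell}\beta_{\ell}a_{\ell}\geq\prod_{\ell}a_{\ell}^{\beta_{\ell}}$.

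I would then establish this weighted AM--GM inequality by Jensen's inequality applied to the strictly concave logarithm. Concavity gives
\[
\log\Big(\sum_{\ell}\beta_{\ell}a_{\ell}\Big)\geq\sum_{\ell}\beta_{\ell}\log a_{\ell}=\log\Big(\prod_{\ell}a_{\ell}^{\beta_{\ell}}\Big),
\]
and exponentiating both sides, using monotonicity of the exponential, yields $g(\bm{x})\geq\tilde{g}(\bm{x})$ at once.

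For the equality claim I would invoke the fact that Jensen's inequality for the strictly concave $\log$ is tight precisely when the arguments $a_{\ell}$ are all equal, i.e. when $u_{\ell}(\bm{x})/\beta_{\ell}$ is independent of $\ell$. Choosing $\beta_{\ell}=u_{\ell}(\bm{x})/g(\bm{x})$ makes this ratio equal to the common value $g(\bm{x})$ for every $\ell$; these weights are admissible because $\sum_{\ell}\beta_{\ell}=\sum_{\ell}u_{\ell}(\bm{x})/g(\bm{x})=1$, and substituting back gives $\tilde{g}(\bm{x})=\prod_{\ell}g(\bm{x})^{\beta_{\ell}}=g(\bm{x})$. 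I do not expect any genuine obstacle: the whole argument is a one-line application of a classical inequality, and the only point demanding care is the positivity bookkeeping that keeps every logarithm well defined, which the hypothesis of positive $\bm{x}$ guarantees.
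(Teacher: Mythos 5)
Your proof is correct and follows essentially the same route as the paper's appendix: both reduce the claim to the weighted arithmetic--geometric mean inequality via the identical substitution (the paper writes $y_{\ell}=\beta_{\ell}x_{\ell}$ where your $a_{\ell}=u_{\ell}(\bm{x})/\beta_{\ell}$ plays the role of $x_{\ell}$), and both verify equality at $\beta_{\ell}=u_{\ell}(\bm{x})/g(\bm{x})$. The only difference is that the paper cites AM--GM as known, whereas you make the argument self-contained by deriving it from Jensen's inequality for the concave logarithm and explicitly checking the equality condition.
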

\begin{proof}
	\label{proof}
	See Appendix.
\end{proof}

By using Lemma~\ref{lemma3}, we can rely on GP to find the globally optimal solution. $\bar{\bm{p}}\triangleq(\bm{p}^{T},(\bm{p}^{'})^{T})^{T}$ denotes the complete optimization variables in our algorithm. Specifically, we focus on the expansion for the denominator of the objective function in $\bm{P2}$ and set an initial power vector $\bar{\bm{p}}_{0}$ to calculate a group of coefficients. Then, the denominator is transformed into a monomial $\tilde{g}(\bar{\bm{p}})$ with calculated coefficients. This problem is GP in standard form, and we can modify it to a GP in convex form and use the interior point method to solve. With the optimal solution $\bar{\bm{p}^{*}}$, we solve different GP constantly to improve the accuracy for the approximation of original problem. Notice that the Karush-Kuhn-Tucker (KKT) conditions is satisfied obviously as we take derivatives of $g(\bar{\bm{p}})$ and $\tilde{g}(\bar{\bm{p}})$, this is verified as the best local monomial approximation \cite{30}. The algorithm is convergent to the optimal solution of $\bm{P2}$, and we will show numerical results in simulation section. 
The proposed algorithm is summarized in Algorithm~\ref{algorithm1}, and we define
\begin{equation}
\begin{aligned}
&\prod_{k=1}^{K}\prod_{\ell=1}^{J_{D}}g_{k}(\bar{\bm{p}})g_{\ell}^{'}(\bar{\bm{p}})\triangleq\sum_{s}u_{s}(\bar{\bm{p}})\\
&\triangleq\sum_{s}c_{s}(P_{1}^{a_{s}^{1}}\cdots P_{J}^{a_{s}^{J}})[(P^{'}_{1})^{b_{s}^{1}}\cdots (P^{'}_{J_{D}})^{b_{s}^{J_{D}}}],
\end{aligned}
\label{eq47}
\end{equation}
where $c_s>0$, and $a_{s}^{j},b_{s}^{\ell}\in \bm{R}$.

\begin{algorithm}[h]
	\caption{Iterative GP power allocation algorithm.}
	\begin{algorithmic}[1]
		\STATE Initialize $\bar{\bm{p}}=\bar{\bm{p}}_{0}$ and set the maximum iterative number $T^{\max}$.
		\STATE Set $I=1$.     
		\WHILE {$I\leq T^{\max}$}
		\STATE Compute coefficients $\beta_{s}$ by\\ $\beta_{s}=u_{s}(\bar{\bm{p}})/[\prod_{k=1}^{K}\prod_{\ell=1}^{J_{D}}g_{k}(\bar{\bm{p}})g_{\ell}^{'}(\bar{\bm{p}})]$.
		\STATE Substitute the denominator of the objective function in $\bm{P2}$ by $\tilde{g}(\bar{\bm{p}})$, where\\ $\tilde{g}(\bm{p},\bm{p^{'}})=\prod_{s}\left[\frac{u_{s}(\bm{p},\bm{p^{'}})}{\beta_{s}}\right]^{\beta_{s}}$.
		\STATE Solve the new GP optimization problem in convex optimization methods with the objective function $\frac{\prod_{k=1}^{K}\prod_{\ell=1}^{J_{D}}f_{k}(\bm{p^{'}})f_{\ell}^{'}(\bm{p})}{\tilde{g}(\bm{p},\bm{p^{'}})}$ and the same constraints as in $\bm{P2}$.
		\STATE Update $\bar{\bm{p}}={\bar{\bm{p}}^{*}}$, where ${\bar{\bm{p}}^{*}}$ is the optimal solution of the new GP problem.
		\STATE $I=I+1$.
		\ENDWHILE
		\RETURN The optimal solution ${\bar{\bm{p}}^{*}}$ of $\bm{P2}$.
	\end{algorithmic}
	\label{algorithm1}
\end{algorithm}

The main complexity in Algorithm~\ref{algorithm1} for each iteration reflects on the computation of coefficients $\bm{\beta}$ and the process of solving GP. Actually, the computation of new function $\tilde{g}(\bm{p},\bm{p^{'}})$ requires $(K+J_{D})(d_{c}+1)+\sum_{s}(J+J_{D}+3)$ multiplications and $\sum_{s}(J+J_{D}+1)$ exponentiations. Besides, GP is usually solved by interior point methods, which is proved to have polynomial time complexity \cite{29,37}.
\section{Simulation Results}
\label{Simulation}
In this section, we perform simulations to demonstrate the performance of our power allocation algorithm. Notice that uplink SCMA system with large users can be resolved in small-scale SCMA structure with same overloading feature, the SCMA scheme with $J=6$, $K=4$, and $N=2$ is claimed to be the basic regular SCMA system. Our simulation scenario utilizes the basic SCMA scheme with the same factor graph in figure \ref{fig2}. Considering the decoding performance of D2D communications in proposed network \cite{23}, we set no more than two D2D pairs in this scenario. Besides, notice that the small modulation and simple channel coding are preferable as the devices in massive machine communications are low-complexity \cite{35}, we adopt Quadrature Phase Shift Keying (QPSK) for cellular users here. Thus, each cellular user's total power should be equally allocated to two subcarriers according to the SCMA codebook structure. The channels for both cellular communications and D2D communications are generated with a normalized Rayleigh fading component and a distance-dependent path loss model. Table~\ref{table1} summarizes the simulation parameters \cite{15,36}. Different scenes and methods are compared to prove that this algorithm is appropriate for this specific optimization problem. 

\begin{table}
	\centering
	\caption{Simulation parameters.}
	\begin{tabular}{|c|c|}
		\hline 
		\scriptsize No. CU, $J$ & \scriptsize 6\\ 
		\hline 
		\scriptsize No. subcarriers, $K$ & \scriptsize 4\\ 
		\hline 
		\scriptsize No. non-zero dimension, $N$ & \scriptsize 2\\ 
		\hline 
		\scriptsize Noise power, $N_{0}$ & \scriptsize -174dBm/Hz\\ 
		\hline 
		\scriptsize Power limitation of CU, $P_{0}$ & \scriptsize 30dBm\\ 
		\hline 
		\scriptsize Power limitation of DU, $P_{0}^{'}$ & \scriptsize 30dBm\\ 
		\hline 
		\scriptsize SINR of CU, $\gamma_{0}^{c}$ & \scriptsize 0dB\\ 
		\hline 
		\scriptsize SINR of DU, $\gamma_{0}^{d}$ & \scriptsize 10dB\\ 
		\hline
		\scriptsize Bandwidth, $B$ & \scriptsize 180kHz\\
		\hline
		\scriptsize Initial Power of CU, $p_{0}$ & \scriptsize $P_{0}/2$\\
		\hline  
		\scriptsize Initial Power of DU, $p_{0}^{'}$ & \scriptsize $P_{0}^{'}/2$\\
		\hline
		\scriptsize Cell radius & \scriptsize 500m\\ 
		\hline
		\scriptsize Distance of each D2D pair & \scriptsize 1$\sim$20m\\ 
		\hline 
		\scriptsize Path loss Model of CU & \scriptsize 37.6log10($d_C$[km])+128.1\\ 
		\hline 
		\scriptsize Path loss Model of D2D & \scriptsize 40log10($d_D$[km])+148\\
		\hline    
	\end{tabular} 
	\label{table1}
\end{table}

\subsection{Convergence}
\label{convergence}
The convergence of this algorithm is shown from figure \ref{fig3} to figure \ref{fig5}. Figure \ref{fig3} illustrates the power allocation versus number of iterations by Algorithm~1, where CU and DU denote cellular user and D2D user, respectively. The number of D2D pairs $J_{D}$ is $1$ in figure \ref{fig3} with a specific numerical result from the tests. The figure shows that the power allocation algorithm converges very fast and achieves the optimal power only after three iterations. We can analyze that after few series of approximation, the solutions converge to a point satisfying the KKT conditions of the origin problem in $\bm{P2}$. Thus, transforming complementary GP into standard GP by this algorithm often performs a fast convergence rate. Besides, two users are allocated with maximal power from all seven users as they may suffer from relatively small interference. 

Moreover, we consider the scenarios with two pairs of D2D users, which of course will improve the computational complexity of the optimization problem as the variables increase. We can observe from figure \ref{fig4} that only one CU is allocated with maximal power as the interference among cellular users and D2D users increases. Besides, the power allocation converges very close to the optimal power allocation by the $2$th iterations, which still proves that this algorithm performs a fast convergence rate.

\begin{figure}[h]
	\centering  
	\includegraphics[width=0.8\linewidth]{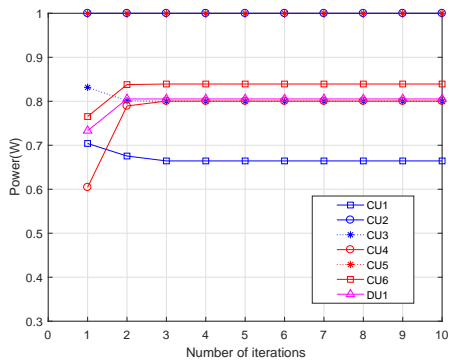}  
	\caption{Convergence of power allocation, $J_{D}=1$.} 
	\label{fig3}
\end{figure}

\begin{figure}[h]
	\centering  
	\includegraphics[width=0.8\linewidth]{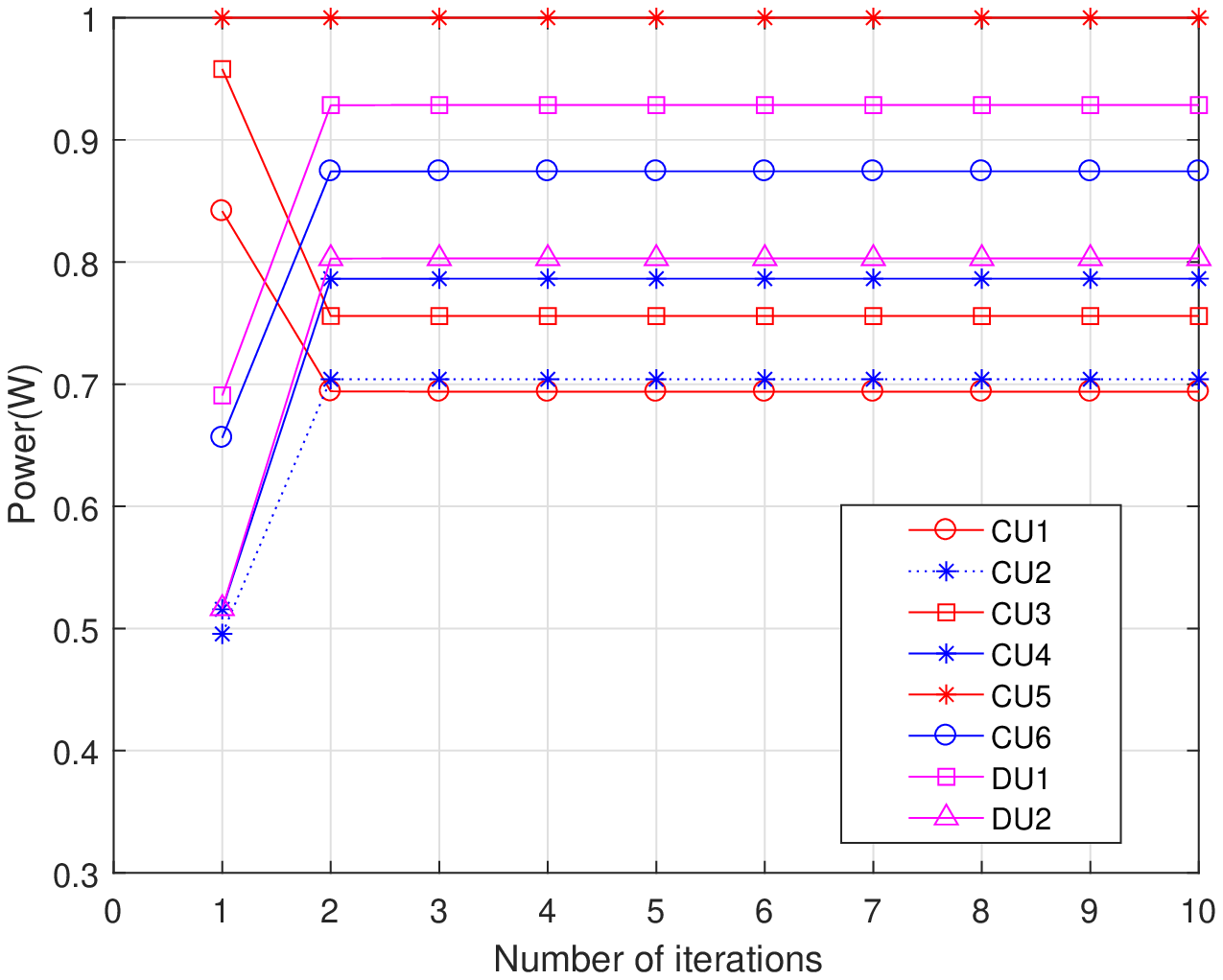}  
	\caption{Convergence of power allocation, $J_{D}=2$.} 
	\label{fig4}
\end{figure}

\begin{figure}[h]
	\centering  
	\includegraphics[width=0.8\linewidth]{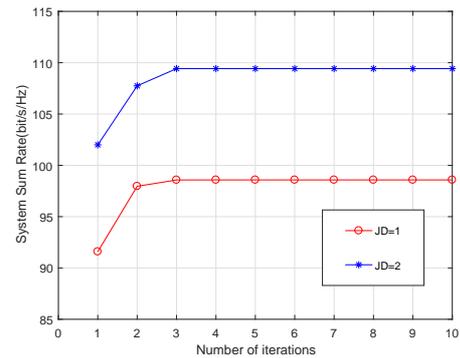}  
	\caption{Convergence of system sum rate.} 
	\label{fig5}
\end{figure}
The sum rate versus number of iterations is shown in figure \ref{fig5}, which compares the scenarios with different D2D pairs. The algorithm always converges within five iterations while the rate achieves maximum after three iterations from figure \ref{fig5} in both scenarios. From this aspect, the algorithm is also proved to have a better convergence rate. Besides, the increase of D2D pairs makes the performance of sum rate rises about $1.2\%$ while other parameters remain same.
\begin{figure}[h]
	\centering  
	\includegraphics[width=0.8\linewidth]{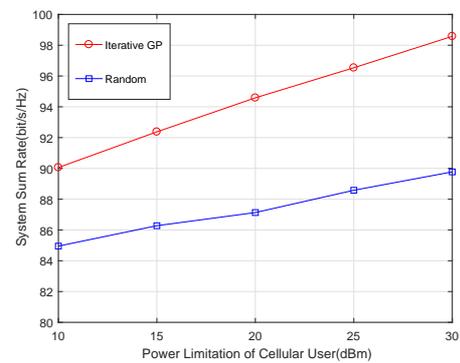}  
	\caption{Algorithm performance comparison versus cellular power limitation with $J_{D}=1$.} 
	\label{fig6}
\end{figure}

\begin{figure}[h]
	\centering  
	\includegraphics[width=0.8\linewidth]{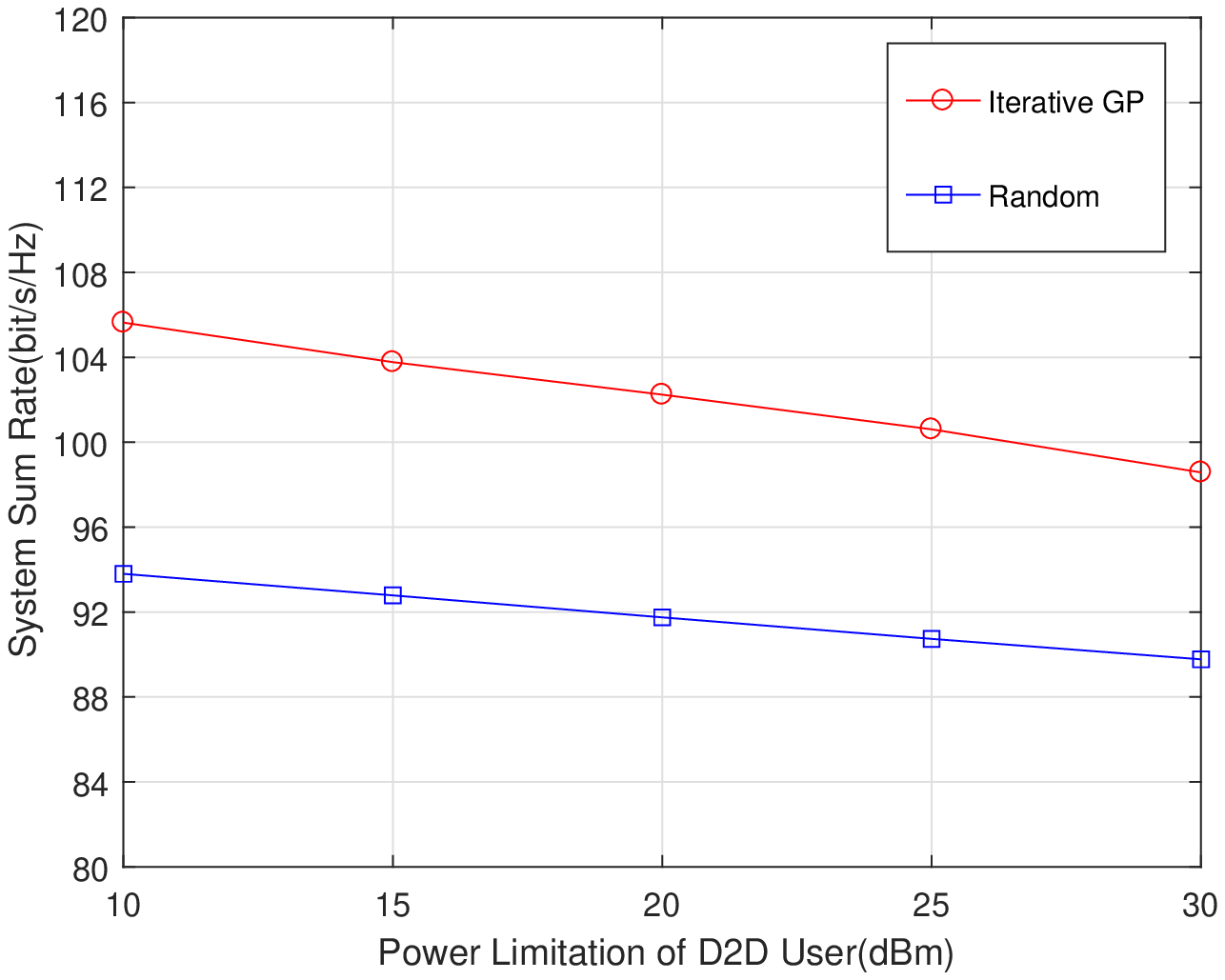}  
	\caption{Algorithm performance comparison versus D2D's power limitation with $J_{D}=1$.} 
	\label{fig7}
\end{figure}

\begin{figure}[h]
	\centering  
	\includegraphics[width=0.8\linewidth]{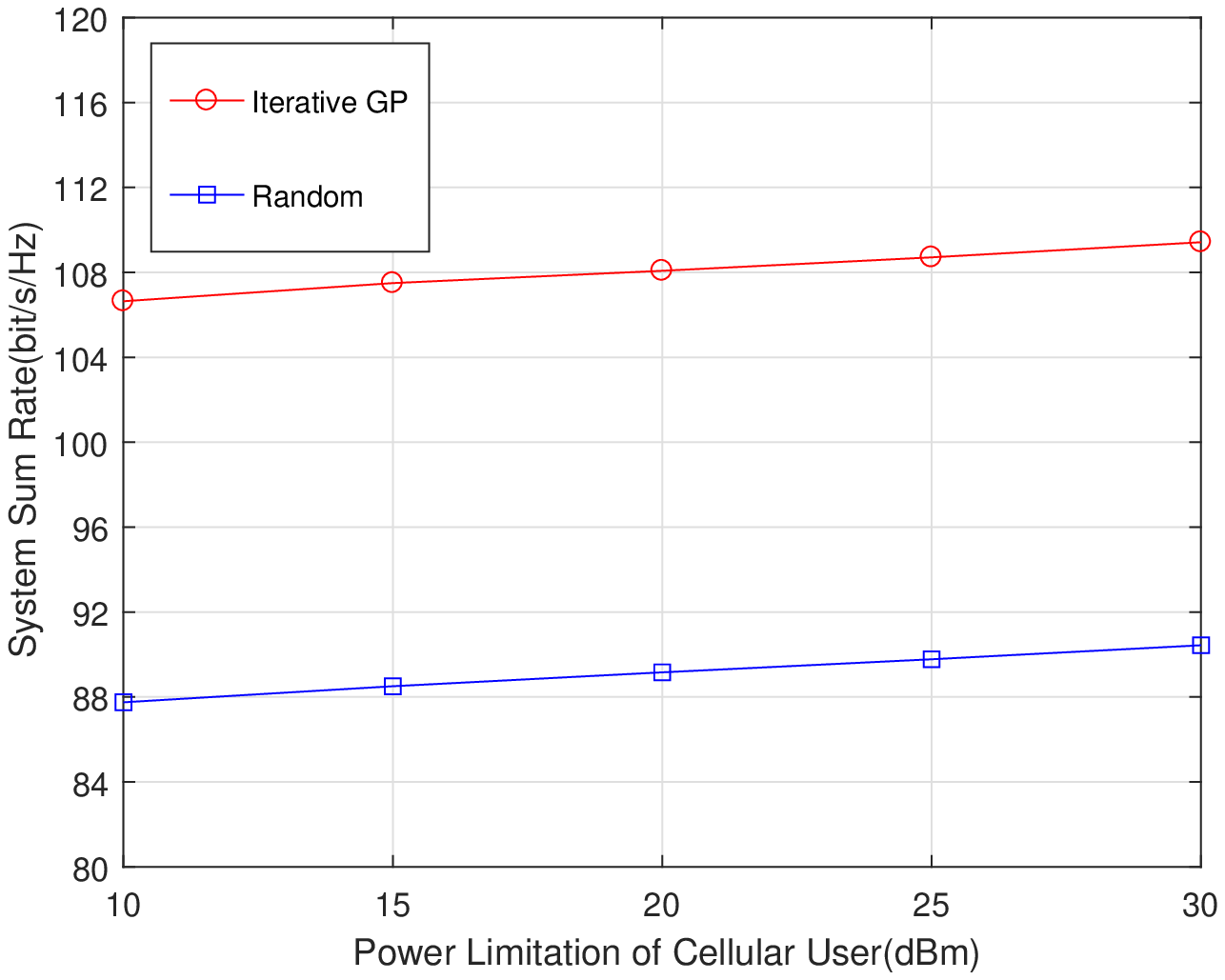}  
	\caption{Algorithm performance comparison versus cellular power limitation with $J_{D}=2$.} 
	\label{fig8}
\end{figure}

\begin{figure}[h]
	\centering  
	\includegraphics[width=0.8\linewidth]{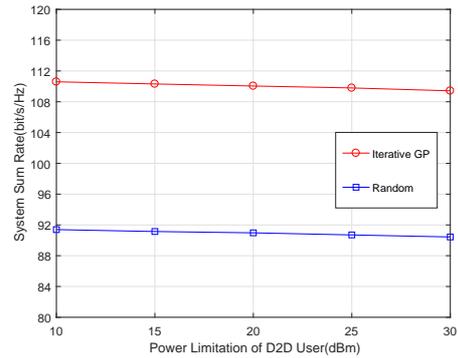}  
	\caption{Algorithm performance comparison versus D2D's power limitation with $J_{D}=2$.} 
	\label{fig9}
\end{figure}

\subsection{Algorithm Performance}
\label{performance}
We compare the performance of the proposed algorithm with random power allocation from figure \ref{fig6} to figure \ref{fig9}. We set different power limitation values for both cellular users and D2D users to observe the performance of final system sum rate. The sum rate versus the power limitation of cellular users $P_{0}$ is shown in figure \ref{fig6}, while figure \ref{fig7} illustrates system sum rate versus power limitation of D2D user, $P_{0}^{'}$. Both the results in figure \ref{fig6} and figure \ref{fig7} utilize the scenario of $J_D=1$. The two figures prove that the proposed iterative GP based algorithm achieves better sum rate performance than the random power allocation, which demonstrates the advantage of the proposed algorithm. That is because the iterative algorithm always converges to the optimal solution to the original optimization problem. Besides, it can be founded that the increase of threshold for cellular users will improve the performance of sum rate. However, when the threshold for D2D users increases, the sum rate will decrease. The reason is that the rate of cellular users always has larger effect on the system sum rate, and the less D2D's power allocation could give improvement to the whole system while the communication of D2D users must be guaranteed. 

Figure \ref{fig8} and figure \ref{fig9} show the similar comparison for the system sum rate performance with the scenario of $J_D=2$. We can observe that the gap of performances between proposed algorithm and random scheme becomes larger compared to the scenario of $J_D=1$. It means that the effectiveness of our proposed algorithm is more prominent with much mutual interference case. Besides, the variety of either cellular or D2D's threshold doesn't provide obvious change to the performance of system sum rate. The probable reason is that cellular communications and D2D communications have made equal effect on the performance of system sum rate with much mutual interference.
\section{Conclusion}
\label{Conclusion}
We investigate the hybrid cellular network with D2D communications, where SCMA schemes is employed for cellular users. To improve the performance of system sum rate region, we first derive the capacity for cellular users as well as D2D users. Then we consider the joint power optimization problem to maximize the system sum rate and propose a GP based iterative algorithm. Simulation results prove the fast convergence and show the much improvement of proposed algorithm compared with random power allocation. With the increasing number of D2D users, the proposed algorithm performs prominent for the hybrid network model.
\section*{ACKNOWLEDGEMENT}
\label{Ackonwledgement}
This paper is supported by National key project 2018YFB1801102 and 2020YFB1807700, by NSFC 62071296, and STCSM 20JC1416502, 22JC1404000.
\section*{Appendix}
\label{appendix}
Here is the proof of Lemma~\ref{lemma3}. The arithmetic-geometric mean (AM-GM) inequality is shown as
\begin{equation}
\sum_{\ell}\beta_{\ell}x_{\ell}\geq\prod_{\ell}x_{\ell}^{\beta_{\ell}},
\label{eq48}
\end{equation}
where $x_{\ell}>0$, $\beta_{\ell}\geq0$, and $\sum_{\ell}\beta_{\ell}=1$, for $\forall \ell$. We utilize $y_{\ell}=\beta_{\ell}x_{\ell}$ and rewrite Eq.~\eqref{eq48} as
\begin{equation}
\sum_{\ell}y_{\ell}\geq\prod_{\ell}(\frac{y_{\ell}}{\beta_{\ell}})^{\beta_{\ell}},
\label{eq49}
\end{equation}  
with equality for $\beta_{\ell}=y_{\ell}/\sum_{\ell}y_{\ell}$, for $\forall \ell$. This equality can be extended to the case of general monomial and polynomial \cite{30}.

\bibliographystyle{gbt7714-numerical}
\bibliography{myref}
\newpage
\biographies
\begin{CCJNLbiography}{lyk.eps}{Yukai Liu}
received the B.S. degree from Shanghai Jiao Tong University (SJTU) in 2016. He is currently pursuing the Ph.D. degree on Information and Communication Engineering from SJTU. His current research interests include wireless communications, with particular focus on multiple access and device-to-device communications.
\end{CCJNLbiography}

\begin{CCJNLbiography}{cw.eps}{Wen Chen}
 is a tenured professor at Shanghai Jiao Tong University. He is a fellow of Chinese Institute of Electronics, and a distinguished lecturer of IEEE Communications Society and IEEE Vehicular Technology Society. He is an Editor of IEEE Transactions on Wireless Communications, IEEE Transactions on Communications, IEEE Access and IEEE Open Journal of Vehicular Technology, and the Shanghai chapter chair of IEEE Vehicular Technology Society. His research interests include reconfigurable meta-surface, multiple access, wireless AI, and green networks. He has published 100+ articles in IEEE journals and more than 120 papers in IEEE Conferences, with citations more than 8000 in Google scholar.
\end{CCJNLbiography}

\end{document}